\newcolumntype{L}[1]{>{\raggedright\let\newline\\\arraybackslash\hspace{0pt}}m{#1}}
\newcolumntype{C}[1]{>{\centering\let\newline\\\arraybackslash\hspace{0pt}}m{#1}}
\newcolumntype{R}[1]{>{\raggedleft\let\newline\\\arraybackslash\hspace{0pt}}m{#1}}
\newtheorem{theorem}{Theorem}
\newtheorem{definition}[theorem]{Definition}
\newtheorem{proposition}[theorem]{Proposition}
\newenvironment{proof}[1][Proof]{\noindent\textbf{#1.} }{\ \rule{0.5em}{0.5em}}
\titlespacing*{\section}{0.00cm}{1.00cm}{0.50cm}
\titlespacing*{\subsection}{0.00cm}{0.50cm}{0.30cm}
\titlespacing*{\subsubsection}{0.50cm}{0.50cm}{0.30cm}
\renewcommand*{\@seccntformat}[1]{\csname the#1\endcsname.\hspace{0.25cm}}
\setlist{nolistsep}
\begin{document}

%\title{\textbf{Social welfare interpretation of a new measure of Tax Redistribution and Progressivity}}
\title{\textbf{The Social Welfare Implications\\of the Zenga Index}}
\author{
Francesca Greselin\thanks{%
Department of Statistics and Quantitative Methods, Universit\`{a} degli Studi di Milano-Bicocca. ORCID: 0000-0003-2929-1748. Phone: 0039 02 64 483 123. Email: francesca.greselin@unimib.it. Address: Via Bicocca degli Arcimboldi 8, 20126, Milano (IT).},
Simone Pellegrino\thanks{%
Corresponding Author. Department of Economics and Statistics -- ESOMAS, Universit\`{a} degli Studi di Torino. ORCID: 0000-0001-8372-1054. Phone: 0039 011 670 6060. Email: simone.pellegrino@unito.it. Address: Corso Unione Sovietica 218bis, 10134, Torino (IT).},
Achille Vernizzi\thanks{%
Department of Economics, Management and Quantitative Methods, Universit\`{a} degli Studi di Milano. ORCID: 0000-0002-1641-5003. Phone: 0039 02 50 321 460. Email: achille.vernizzi@unimi.it. Address: Via Conservatorio 7, 20122, Milano (IT).}\\
}

%\date{}
\date{June 18, 2020\\ \textbf{Preliminary draft -- Do not cite}}

\maketitle

\vspace{1cm}

\begin{abstract}
\noindent We introduce the social welfare implications of the \cite{zenga2007} index, a recently proposed index of inequality. Our proposal is derived by following the seminal book by \cite{son2013equity} and the recent working paper by \cite{KakwaniSon2019}. We compare the Zenga based approach with the classical one, based on the Lorenz curve and the Gini coefficient, as well as the Bonferroni index. We show that the social welfare specification based on the Zenga uniformity curve presents some peculiarities that distinguish it from the other considered indexes. The social welfare specification  presented here provides a deeper understanding of how the Zenga index evaluates the inequality in a distribution. 
\end{abstract}

\vspace{1.5cm}

{\bf JEL-Codes:} H23, H24.

{\bf Keywords\hspace{0.19cm}:} Zenga Index, Gini Index, Bonferroni index, Social Welfare, Bottom-to-top Ratios.

\newpage 
%%%%%%%%%%%%%%%%%%%%%%%%
\section{Introduction}\label{intro}
%%%%%%%%%%%%%%%%%%%%%%%

Measuring and comparing social welfare levels attached to different income distributions is one of the most important issues in the study of income inequality. Summary statistic measures aseptically depict the inequality observed in a given distribution of income; but, from an economic perspective, the fundamental tool is given by the possibility to rank different distributions according to the social welfare level.

The seminal paper by \cite{Atkinson1970} illustrated under which conditions it is possible to rank distributions according to their welfare level, by focusing on the set of increasing and concave social welfare functions derived by additively separable and symmetric utility functions of individual incomes \citep{dalton1920}. Starting from Atkinson's pioneering paper and the related literature (see, among others, \cite{dasgupta_1973}, \cite{Rothschild1973}, \cite{Blackorby1977}, \cite{sen_1979}, \cite{usher_1980}, \cite{willig_1981} and \cite{Shorrocks1983}), the attention on the Lorenz curves and the generalized Lorenz curves follows.

The recent book by \cite{son2013equity} focuses the attention to the link between inequality measures and social welfare levels (Ch. 2): In particular, the author re-examines the \cite{Gini1914} coefficient social welfare implications on the light of the results due to \cite{Sen1974}, and then extends the results to the \cite{bonferroni_1930} coefficient.\footnote{\cite{KakwaniSon2019} enlarge the analysis by considering taxes.} Following this approach, the social welfare level imputable to an income distribution depends on the income levels and the attached weights. For what concerns the \cite{Gini1914} and the \cite{bonferroni_1930} coefficients, these weights depend solely on the ranks of income units.

The original contribution of the present paper is to apply and extend these results to the \cite{zenga2007} index and curve, a recent methodology to plot and measure inequality. Basically, the Zenga approach is based on comparing the mean income of the poorest income earners to the mean income of the remaining richest part of the population. The interest received by this fresh rethinking about inequality is documented by a still-growing literature. For example, we find research on properties of the index (see \cite{polisicchio2008, polisicchio2009,maffenini2014,arcagni_porro2014}), inferential theorems and applications (see \cite{greselin_pasquazzi2009, greselin_puri_zitikis2009, greselin_pasquazzi_zitikis2010, antal2011, langel_tille2012, greselin_pasquazzi_zitikis2013, greselin_pasquazzi_zitikis2014}), the decomposition of the index by population subgroups (\cite{radaelli2008, radaelli2010}), the decomposition by income sources (see \cite{ zenga2012, pasquazzi2018components}), a longitudinal decomposition  (\cite{mussini_zenga2013}), and application on real data (see, among others, \cite{arcagni_zenga2013, alina2019}).

Our findings show that, according to this new inequality index, the attached social welfare level depends on the inequality within the income distribution, and the weights attached to each income unit, as well.
Differently from the standard approach and the standard coefficients, in the Zenga index the role of the weight attached to each income unit is twofold and can be decomposed in two effects: the first effect solely considers the ranks of the income units, whilst the second depends on the ratio between the overall mean income and the average income of richest part of the income units. The novelty arises from the peculiar fact that, while the denominator for the Gini and the Bonferroni indexes is a constant value, i.e., the mean income of the population, in the Zenga index the denominator is a function of the rank of the income units. We then give evidence about the social welfare specification based on the Zenga uniformity curve. It presents some distinguishing features from the other considered indexes. In this context, the social welfare specification helps to better understand how the Zenga index evaluates the inequality in a distribution.\\

The remainder of this paper is organized as follows. Section \ref{basic_notation} presents the basic notations for the Gini and the Bonferroni as well as the Zenga coefficients. Section \ref{dominance} recalls the notion of Lorenz dominance. Then Section \ref{classical_SWF} addresses the issue of the social welfare evaluation according to the standard Gini coefficient, the generalized one, and the Bonferroni index. Section \ref{SWF_ZENGA} gets to the heart of the topic by presenting the peculiarities of the social welfare function according to the Zenga inequality index. %Section \ref{analisi_empirica} shows an application to a tax case.
Section \ref{conclusioni} offers some concluding remarks.

\section{Basic notation}\label{basic_notation}

Given a random variable $X\geq0$ with non negatively supported $cdf$ $F(X)$ for $X\geq0$, representing gross or net incomes as well as taxes, we denote by $F^{-1}(p)=inf\lbrace z:F(X)\geq p \rbrace$ the corresponding population quantile function for $0<p<1$.

The notion of the \cite{lorenz1905} curve was introduced to plot the cumulative share of $X$, denoted by $L_F(p)$, versus the cumulative share of the population $p$.\footnote{%
In the ideal case of perfect equality (that is, a society in which all people have the same income) the share of incomes equals the share of the population, so that $L_F(p)=p$, for all $0<p<1$. In this case the Lorenz curve is the diagonal line from $(0,0)$ to $(1,1)$. On the other hand, the lower the share of income $L_F(p)$ held by the share of income earners $p$, the higher the inequality. In the ideal case of perfect inequality (that is, a society in which all people but one have an income of nil) the share of incomes equals zero for $0 \leq p<1$, so that $L_F(p)=0$, and only for $p=1$ we have $L_F(1)=1$.} It is given by $(p,L_F(p))$, where
\begin{equation}
L_F(p)=\frac{\int_{0}^{p}F^{-1}(s)ds}{\int_{0}^{1}F^{-1}(s)ds}=\frac{1}{\mu_F}\int_{0}^{p}F^{-1}(s)ds	\label{Lorenz}
\end{equation}
\noindent and $\mu_F=E(X)$ denotes the mean value or the expectation of the random variable $X$.

It seems very natural to express the degree of inequality through the deviation of the actual Lorenz curve from the diagonal line. The \cite{Gini1914} index is precisely given by twice the area between the equality line and the Lorenz curve
\begin{equation*}
G_F=2\int_{0}^{1}(p-L_F(p))dp.
\end{equation*}

From another point of view, \cite{bonferroni_1930} proposed to express the degree of inequality through 
\begin{equation*}
B_F=\int_{0}^{1}(1-L_F(p)/p)dp.
\end{equation*}

More recently, \cite{zenga2007} introduced his proposal of measuring the inequality in the population by the following curve:
\begin{equation*}
Z_F(p)=\frac{L_F(p)}{(1-L_F(p))}\frac{(1-p)}{p},
\end{equation*}
and index
\begin{equation*}
Z_F=\int_{0}^{1}\frac{L_F(p)}{(1-L_F(p))}\frac{(1-p)}{p} \,dp.
\end{equation*}
%%%%%%%%%%%%%%%%%%%%%%%
%\subsection{Dominance relationships}
\section{Dominance relationships}\label{dominance}
 %%%%%%%%%%%%%%%%%%%%%%%
We begin recalling the notion of Lorenz dominance \citep{whitmore_1970}. We will see that two distributions are ordered with respect to the Lorenz ordering if and only if they are ordered with respect to the ordering based on the $Z_F(p)$ curve, too. Let $F_X$ and $F_Y$ be two distribution functions related to two continuous non-negative random variables $X$ and $Y$, both with finite and positive expected
value. We need here to introduce some notions of dominance:
 \begin{definition}
We say that  $X$ dominates $Y$ under the Lorenz ordering, denoting it by $F_X \geq_L F_Y $ if and only if $ L_{F_X} (p) \leq L_{F_Y} (p) \quad \forall p \in (0,1)$.
 \end{definition}
 \begin{definition}

We say that  $X$ dominates $Y$ under the Zenga ordering, denoting it by $F_X \geq_{Z} F_Y $ if and only if $ Z_{F_X} (p) \geq Z_{F_Y} (p) \quad \forall p \in (0,1)$. 
 \end{definition}
\cite{polisicchio2009} have shown that
\begin{eqnarray*}
 F_X \leq_L F_Y \iff L_X(p)&=&1- \frac{1-p}{1-pZ_X(p)} \geq 1- \frac{1-p}{1-pZ_Y(p)}= L_Y(p) \quad \forall p \in (0,1) \iff \\ \iff Z_X(p)&=&\frac{1}{p}-\frac{1-p}{p [1-L_X(p)] }\leq \frac{1}{p}-\frac{1-p}{p [1-L_Y(p)] }=Z_Y(p) \quad \forall p \in (0,1) \iff \\
 \iff F_X &\geq_{Z}& F_Y .
 \end{eqnarray*}

Therefore, introducing the Uniformity curve $U_X(p)=1-Z_X(p)$ that assesses the deviation, at the percent point $p$, from the case of equality, we obtain that the Lorenz curve of $X$ is always lower than the Lorenz of $Y$ if and only if the Uniformity curve of $X$ is always lower than the Uniformity curve of $Y$.
 %%%%%%%%%%%%%%%%%%%%%%%%
%\subsection{Social Welfare functions for Gini, generalized Gini, and Bonferroni}

\section{Social Welfare functions for Gini, generalized Gini, and Bonferroni Indexes}\label{classical_SWF}
 %%%%%%%%%%%%%%%%%%%%%%%%
The welfare implications of the Gini index have been widely discussed in the literature by many authors, among them we may cite \cite{Atkinson1970}, \cite{newbery_1970}, \cite{kats1972}, \cite{Sheshinski1972}, \cite{dasgupta_1973}, \cite{Rothschild1973}, \cite{Sen1974}, \cite{Chipman1974}, \cite{sen_1973}. 

We  will recall briefly in this Section  the Gini and Bonferroni social welfare functions. To this aim, we start from the general form of the welfare function, given by
\begin{equation}
W=\int_0^{+\infty} x \,\, \nu(F(x)) \,\, f(x) \, \, dx
 \label{soc_wel_fun}.
\end{equation}
where $f(x)$ is the density function of $X$, $\nu(F(x)) $ is the weight attached to the income level $x$ such that $\nu'(F(x)) <0$, implying that weights must decrease monotonically with $F(x)$; in other words, greater weights are given to poorer persons than richer ones, and the total weight adds up to 1 for the entire population:
\begin{equation*}
\int_0^{+\infty} \,\, \nu(F(x)) \,\, f(x) \, \, dx =1.
\end{equation*}

The social welfare function captures the idea of relative deprivation, by assigning a weight to income $x$ to depend on the ranking of all individuals in society. The lower a person is on a welfare scale, the higher this person's sense of deprivation with respect to others in society. We call this a class of \textit{rank order social welfare functions}. %The most attracting feature of this social welfare function is that it

\cite{Shorrocks1983} and \cite{kakwani1984APK} extended the usual ranking of distributions based on the Lorenz order, and proved that if the generalized Lorenz curve $\mu L(p)$ of the distribution $X_1$ is higher than the 
generalized Lorenz curve of the distribution $X_2$ at all points, then
for a wide range of social welfare functions (all symmetric and quasi-concave s.w.f.), the social welfare implied by  $X_1$
will always be higher than the social welfare implied by  $X_2$. Therefore, the area under twice the generalized Lorenz can be used as a measure of social welfare
\begin{equation*}
W_G=2\int_{0}^{1} \mu L_F(p) \,dp = \mu [1-G],
\end{equation*}
yielding the social welfare function implied by the Gini index.
The last equality is derived as follows:

\begin{eqnarray*}
W_G&=&2\int_{0}^{1} \mu L_F(p) \,dp \\
&=&\mu \left[ 1-\left( 1-2 \int_{0}^{1} L_F(p) \,dp \right) \right] =\mu[1-G]. \\
\end{eqnarray*}

We would like now to express $W_G$ like a welfare function, by making explicit the weights $\nu_G(F(x))$ for the welfare function implicit within the Gini approach:
\begin{eqnarray*}
W_G&=&2 \int_{0}^{1} \mu L_F(p) \,dp \\
&=& 2 \int_0^1 \left( \int_0^p F^{-1}(s) \, ds \right) \, dp \\
&=& 2 \left[ \left( \int_0^p F^{-1}(s) \, ds \right) p \bigg{|}_0^1 - \int_0^1 F^{-1}(p) \,\, p \,\, dp \right] \\
&=& 2 \left[ \int_0^1 F^{-1}(p) \,\, dp  - \int_0^1 F^{-1}(p) \,\, p\,\, dp \right] \\
&=& 2  \int_0^1 F^{-1}(p) \, (1-p) \,\, dp.
\end{eqnarray*}

We have therefore obtained that $\nu_G(p)=2(1-p)$ is the weight attached to the income level $x_p=F^{-1}(p)$, following the Gini index. The  change of variable from $p=F(x)$ to $x$ provides the way to read the Gini welfare function as a special case of \eqref{soc_wel_fun}, as follows
\begin{eqnarray*}
W_G&=& \int_{0}^{\infty} x \,\, 2(1-F(x))\,\, f(x)dx.
\end{eqnarray*}

\cite{kakwanilibro1980} and \cite{Donaldson1980} as well as \cite{Donaldson1983} developed versions of the extended Gini that depend on social welfare functions.
The Social Welfare function implicit in the generalized Gini index $G_k$ has been derived by \cite{KakwaniSon2019}, analogously, leading to
\begin{equation*}
W_{G_k}= (k+1) \int_{0}^{1} x [1-F(x)]^k f(x) \,\,dx = \mu[1-G_k].
\end{equation*}

We see that $\nu_{G_k}(p) = (k+1)(1-p)^k$ is the weight attached to the income level 
$p = F^{-1}(p)$, following the generalized Gini index.

For the Bonferroni index, \cite{son2013equity} derived a measure of social welfare that is equal to
the area under the generalized Bonferroni curve
\begin{equation*}
W_B= \mu \int_{0}^{1} \frac{L_F(p)}{p} \,dp = \mu[1-B].
\end{equation*}

We would like to express $W_B$ like a welfare function, by making explicit the weights $\nu_B(F(x))$ for the welfare function implicit within the Bonferroni approach:
\begin{eqnarray*}
W_B&=&\mu \int_{0}^{1} \frac{L_F(p)}{p} \,dp\\
&=& \int_0^1 \left( \int_0^p F^{-1}(s) \, ds \right)\,\, \frac{1}{p} \,\, dp \\
&=& \left( \int_0^p F^{-1}(s) \, ds \right) \ln p \bigg{|}_0^1 - \int_0^1 F^{-1}(p) \,\, \ln p \,\, dp \\
&=& - \int_0^1 F^{-1}(p) \,\, \ln p\,\, dp. \\
\end{eqnarray*}

We have shown that $\nu_B(p)= -\ln p$ is the weight attached to the income level $x_p=F^{-1}(p)$, following the Bonferroni approach \citep{KakwaniSon2019}.

 %%%%%%%%%%%%%%%%%%%%%%%%
\section{Social welfare function for the Zenga index}\label{SWF_ZENGA}
 %%%%%%%%%%%%%%%%%%%%%%%%

Finally, we are ready to introduce the social welfare function for the Zenga approach. Due to the equivalence of the orderings given by the Lorenz and the Zenga index, recalled in the previous section, we follow the same steps we did in the previous section,
\begin{equation*}
W_Z=\mu \int_{0}^{1}\frac{L_F(p)}{(1-L_F(p))}\frac{(1-p)}{p} \,dp = \mu[1-Z].
\end{equation*}

As we did with the previous approaches, we would like to express the last integral in terms of a weight function $\nu_Z(F(x))$ to be attached to the income level $x_p=F^{-1}(p)$. To this aim, we have the following result:
\begin{eqnarray}
W_Z=\mu \frac{L_F(p)}{(1-L_F(p))}( \ln p +1-p ) \bigg{|}_0^1 + \int_{0}^{1}\frac{F^{-1}(p)}{(1-L_F(p))^2}(- \ln p + p - 1 ) \,dp 
\label{sarebbebello}
\end{eqnarray}
that holds true by considering integration by parts, $\int_a^b f \,\,g' \,\,dp = f \,\,g \,\,\big{|}_0^1 - \int_a^b f' \,g \,\, dp$, with 
\begin{eqnarray*}
g(p)&=&\ln p + 1-p \quad \textrm{and} \quad g'(p)=\frac{1}{p}-1; 
\end{eqnarray*}
and
\begin{eqnarray*}
 f(p)&=&\frac{L_F(p)}{(1-L_F(p))}\quad \textrm{with} \quad f'(p)=\frac{F^{-1}(p)}{\mu (1-L_F(p))^2}.
\end{eqnarray*}
\begin{proposition}

The first term in the right hand side of \eqref{sarebbebello} is equal to zero.
\label{proposition1}
\end{proposition}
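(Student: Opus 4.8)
The plan is to evaluate the boundary term
\[
\mu \, \frac{L_F(p)}{1-L_F(p)}\,(\ln p + 1 - p)\Big|_0^1
\]
by computing separately the two one-sided limits, at $p=1$ and at $p=0$, and showing that each of them vanishes. Writing $f(p)=L_F(p)/(1-L_F(p))$ and $g(p)=\ln p+1-p$, I note that at both endpoints we face an indeterminate product of the form $0\cdot\infty$: at $p=1$ the factor $g$ vanishes while $f$ blows up (since $L_F(1)=1$), and at $p=0$ the factor $f$ vanishes (since $L_F(0)=0$) while $g$ diverges through $\ln p\to-\infty$. The whole point is therefore to control the competing rates, and the two main ingredients will be a Taylor expansion of $g$ and the monotonicity of the quantile function $F^{-1}$.

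First I would handle the limit $p\to1^-$. A Taylor expansion of $g(p)=\ln p+1-p$ about $p=1$ gives $g(p)=-\tfrac12(1-p)^2+o\big((1-p)^2\big)$, so $g$ vanishes to second order. For the blow-up of $f$, I would use the identity $1-L_F(p)=\mu_F^{-1}\int_p^1 F^{-1}(s)\,ds$ together with the fact that $F^{-1}$ is nondecreasing, which yields $1-L_F(p)\ge \mu_F^{-1}(1-p)F^{-1}(p)$. Since $X\ge 0$ has positive mean, $F^{-1}(p)$ is bounded below by a positive constant for $p$ near $1$, so $f(p)=O\big((1-p)^{-1}\big)$. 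Multiplying the two estimates, $|f(p)\,g(p)|=O(1-p)\to 0$.

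Next I would treat $p\to0^+$. Here $g(p)\sim\ln p$, while the monotonicity of $F^{-1}$ gives $L_F(p)=\mu_F^{-1}\int_0^p F^{-1}(s)\,ds\le \mu_F^{-1}\,p\,F^{-1}(p)$; since $F^{-1}(p)$ stays bounded as $p\to 0^+$ (the quantile function is nondecreasing, hence bounded by $F^{-1}(1/2)$ on a neighbourhood of $0$), we get $L_F(p)=O(p)$ and hence $f(p)=O(p)$, because $1-L_F(p)\to 1$. Therefore $|f(p)\,g(p)|=O\big(p\,|\ln p|\big)\to0$. Combining the two endpoints, the boundary term equals $0-0=0$, which is exactly the claim.

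The main obstacle is purely the asymptotic bookkeeping at the two endpoints: correctly identifying that $g$ vanishes \emph{quadratically} (not merely linearly) at $p=1$, since a linear rate would not suffice to beat the simple pole of $f$, and exploiting the monotonicity of $F^{-1}$ to pin down the rate of decay of $1-L_F(p)$ near $p=1$ and of $L_F(p)$ near $p=0$. Once those rates are in hand, the cancellation of the $0\cdot\infty$ forms is immediate.
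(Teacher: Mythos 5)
Your proof is correct, and it takes a genuinely different route from the paper's. The paper converts each $0\cdot\infty$ endpoint into a quotient and applies De l'H\^opital's rule (twice at $p\to 0^+$, once at $p\to 1^-$), which requires differentiating the Lorenz curve, i.e.\ it implicitly uses $L_F'(p)=F^{-1}(p)/\mu$ and hence some regularity of the quantile function; at $p\to 1^-$ it then argues that $-(\ln p+1-p)$ and $\frac{L_F(p)\mu}{x_p}\left(\frac{1}{p}-1\right)$ both vanish. You instead avoid differentiation altogether: the second-order Taylor expansion $\ln p+1-p=-\tfrac12(1-p)^2+o\bigl((1-p)^2\bigr)$ combined with the monotonicity bound $1-L_F(p)\ge \mu^{-1}(1-p)F^{-1}(p)\ge c(1-p)$ kills the simple pole at $p=1$, and the bound $L_F(p)\le \mu^{-1}pF^{-1}(1/2)$ reduces the $p\to 0^+$ limit to $p\,|\ln p|\to 0$. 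Your bounds are correctly justified (positivity of the mean gives $F^{-1}(p)\ge F^{-1}(p_0)>0$ near $1$, monotonicity gives boundedness near $0$), and your observation that the quadratic, not merely linear, vanishing of $\ln p+1-p$ at $p=1$ is what makes the cancellation work is exactly the right diagnostic. The trade-off is that your argument is more elementary and holds under weaker hypotheses (any nonnegative $X$ with finite positive mean, no smoothness of $F^{-1}$ needed), while the paper's is computationally shorter but leans on differentiability that it does not discuss.
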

\begin{proof}
 We compute
\begin{equation*}
\lim_{p \rightarrow 0} \frac{L_F(p)}{(1-L_F(p))}( \ln p +1-p ) =\lim_{p \rightarrow 0} \frac{\frac{L_F(p)}{1-L_F(p)}}{p} \,\,\, \frac{( \ln p +1-p )}{\frac{1}{p}}= \left[ \frac {0}{0} \right] \,\,\,\left[ \frac {-\infty}{\infty} \right]
\end{equation*}
and the indeterminate form is solved by using De l'Hopital rule on each term:
\begin{equation*}
\lim_{p \rightarrow 0} \frac{\frac{L_F(p)}{1-L_F(p)}}{p} =\lim_{p \rightarrow 0} \frac{L_F'(p)(1-L_F(p))+L_F(p)L_F'(p)}{(1-L_F(p))^2}=\lim_{p \rightarrow 0} \frac{x_p}{\mu}
\end{equation*}
and
\begin{equation*}
\lim_{p \rightarrow 0} \frac{( \ln p +1-p )}{\frac{1}{p}}=\lim_{p \rightarrow 0} \frac{1-\frac{1}{p}}{\frac{1}{p^2}} =0.
\end{equation*}

 We have still to evaluate
\begin{equation*}
 \lim_{p \rightarrow 1} \frac{L_F(p)( \ln p +1-p ) }{(1-L_F(p))} = \left[ \frac {0}{0} \right].
\end{equation*}
Also in this case we apply De l'Hopital rule, leading to
\begin{equation*}  
 \lim_{p \rightarrow 1} \frac{\frac{x_p}{\mu}( \ln p +1-p )+L_F(p)\left( \frac{1}{p} -1\right)}{-\frac{x_p}{\mu}}=\\
 \lim_{p \rightarrow 1} -( \ln p +1-p )- \lim_{p \rightarrow 1} \frac {L_F(p) \mu}{ x_p}\left( \frac{1}{p} -1\right)=0.
 \end{equation*}
 \end{proof}
 
Coming back to \eqref{sarebbebello}, and using Proposition \ref{proposition1}, we conclude  that the weight $\nu_Z(p)$ to be given to each observation $x_p$ assumed within the Zenga approach is given by
\begin{eqnarray*}
\nu_Z(p)=\frac{(- \ln p +p-1 )}{(1-L_F(p))^2},
\label{sarebbebello1}
\end{eqnarray*}   
where $p=F(x)$. In alternative, we may want to express the weight $\nu_Z(p)$ as a function of $x$, in the integral \eqref{sarebbebello}, by making the change of variable $x=F^{-1}(p)$
\begin{equation}
W_Z=\mu \int_{0}^{+\infty}\,\,x\,\, \frac{(- \ln F(x) + F(x)-1)}{(1-L_F(F(x)))^2} \,\,f(x)\,\,dx.
\label{sarebbebello2}
\end{equation}

After deriving $\nu_Z(p)$, we decompose the weight function for the Zenga Social Welfare function, into two terms, to isolate the contribution given by the ranks, and the contribution given by the interplay between the inequality and the ranks. We define

\begin{eqnarray*}
\nu^*_Z(p)&=&\frac{(- \ln p +p-1 )}{(1-p)^2}, \quad \quad \textrm{and}\\
\beta_Z(p)&=&\frac{(1-p )^2}{(1-L_F(p))^2},
\label{decomposition}
\end{eqnarray*}
so that we obtain $\nu_Z(p)=\nu^*_Z(p)\,\,\beta_Z(p)$.

We will show that the term $\nu^*_Z(p)$ is a \textit{non negative, concave upward and strictly decreasing} function of the rank $p$, like the Social Welfare functions 
\begin{equation}
\nu_G (p)=2\,(1-p) \quad \quad \nu_{G_k} (p)=(k+1)\,(1-p)^k \quad \quad \textrm{and} \quad \quad \nu_B(p)=-\ln (p)
\end{equation}
that are implicit in the Gini index $G$, the generalized Gini $G_k$, and the Bonferroni index $B$, respectively (see \cite{KakwaniSon2019}). We will show that the total weight $\nu^*(p)$ adds up to 1. Therefore, the function $\nu^*(p)$ incorporate a society's distributional judgement, where the poorest individual receives the maximum weight and the richest individual gets the minimum weight.

The second term
\begin{equation*}
\beta_Z(p)=\frac{(1-p )^2}{(1-L_F(p))^2}=\Bigg(\frac{\mu}{\mu^+_{F}(p)}\Bigg)^2
\end{equation*}
depends, instead, on both $p$ and $L(p)$. It is straighforward to see, from its rightmost expression, that $\beta_Z(p)$ is a decreasing function of $p$, with $\beta_Z(0)=+ 1$ and $ \lim_{p \rightarrow 1} \beta_Z(p)=0$. % In case of perfect equality, $\beta_Z(p)=1$. 
The greater the ratio $\frac{\mu^+_{F}(p)}{\mu}$ the greater the penalization given to the quantile $x_p$ by $\beta_Z(p)$. 
Comparing to what we have recalled about the Gini and Bonferroni indexes, in the social welfare evaluation based on the Zenga Uniformity curve, beyond the weight function based on the ranks $p$, we also have $\beta_Z(p)$.

This difference arises from the fact that the denominator for the Gini and the Bonferroni indexes is a constant value, namely the mean (income), while in the present approach, the denominator is a function of $p$.
%The second term $\beta_Z(p)$ depends, instead, on both $p$ and $L(p)$. The function $\beta_Z(p)$ is a decreasing function of $p$, with $\beta_Z(0)=1$ and $ \lim_{p \rightarrow 1} \beta_Z(p)=0$. % In case of perfect equality, $\beta_Z(p)=1$. 
%The greater the distance from $L(p)$ and $p$, the greater the penalization is given to the income $z_p$ by $\beta_Z(p)$.
% We will call $\beta_Z(p)$ a ``discount factor", and we will provide an interpretation of this term.
%Comparing to what we have recalled about the Gini and Bonferroni, in the social welfare evaluation based on the Zenga Uniformity curve, beyond to the weight function based on the ranks $p$, we have also $\beta_Z(p)$. This difference arises from the fact that the denominator for the Gini and the Bonferroni is a constant value, while in the present approach, the denominator is a function of $p$.

Now, we want to make a list of the crucial points that we want to address, to show that our methodology yields a Social welfare function complying with \cite{Atkinson1970} requirements. We are interested in proving the following required properties:
\begin{itemize}
\item The weight function $\nu^*_Z(F(x))$ is a decreasing function, giving lesser weight to richer individuals in society. Moreover, by studying its second derivative, we are interested in studying its behavior. We will study, analogously, the behavior of the curve $\nu_Z(p)$, while for $\beta_Z(p)$, we have already derived the needed information.
\item We want to obtain the limits for the weight function $\nu_Z(F(x))$ over the interval $(0,1)$.
\item By integrating the weight function $\nu^*_Z(F(x))$ over $(0,1)$, we obtain 1, that is the total weight add to unity. 
\end{itemize}

 %%%%%%%%%%%%%%%%%%%%%%%%
 \subsection{Properties of the weight function}
 %%%%%%%%%%%%%%%%%%%%%%%%
 In this section we would like to derive the behaviour of the functions $\nu^*_Z(F(x))$, %$\beta_Z(p)$
 and $\nu_Z(p)$.
 We begin by obtaining the limits of the two %three 
 functions on the extremes of the interval $[0,1]$, then we study the first derivative and second derivatives.
 
 \subsubsection{Limits of $\nu_Z(p)$ and $\nu^*_Z(p)$} %and $\beta_Z(p)$}
% \subsubsection{Limits of $\nu_Z(p)$, $\nu*_Z(p)$ and $\beta_Z(p)$}
 %%%%%%%%%%%%%%%%%%%%%%%%%%%%% 
We are interested in deriving the limits of the weight function $\nu_Z(p)$ for $p \rightarrow 0$ and $p \rightarrow 1.$

The first is immediate
\begin{eqnarray*}
\lim_{p \rightarrow 0} \frac{( -\ln p +p-1 )}{(1-L_F(p))^2}= + \infty ,
\end{eqnarray*}
while the second is more tricky:
\begin{eqnarray*}
\lim_{p \rightarrow 1} \frac{(- \ln p +p-1 )}{(1-L_F(p))^2}= \left[ \frac{0}{0} \right]
\end{eqnarray*}

We come back to the useful decomposition of $\nu_Z(p)$ into the product of $\nu^*_Z(p)$ and $\beta_Z(p)$, and we manage each term separately, 
\begin{eqnarray*}
\lim_{p \rightarrow 1}\nu^*_Z(p)=\lim_{p \rightarrow 1} \frac{( \ln p +1-p )}{(1-p)^2}= \left[ \frac{0}{0} \right] \quad \textrm{and} \quad \lim_{p \rightarrow 1} \beta_Z(p)=\lim_{p \rightarrow 1} \left(\frac{\mu}{\mu_F^+(p)}\right)^2=0
\end{eqnarray*}
where the last equality yields from considering the distribution $F$ having infinite support on the positive real line.
After using De l'Hopital on the first limit on $\nu^*_Z(p)$, 
\begin{eqnarray*}
\lim_{p \rightarrow 1} \frac{1-\frac{1}{p}}{2(p-1)}=\left[ \frac{0}{0} \right]
\end{eqnarray*}
we still get to an indeterminate form, so we apply De l'Hopital again:
\begin{eqnarray*}
\lim_{p \rightarrow 1} \frac{1}{2p^2}=1/2.
\end{eqnarray*}
%Focusing now on $\beta_Z(p)$, we have $ \beta_Z(0)=1$, while when $p$ approaches 1, we need a first application of De l'Hopital rule
%\begin{eqnarray*}
%\lim_{p \rightarrow 1} \beta_Z(p)=\lim_{p \rightarrow 1} \frac{(1-p)}{(1-L(p)) x_p/\mu}=\left[ \frac{0}{0} \right] 
%\end{eqnarray*}
%yielding an indeterminate form, so we apply De l'Hopital again:
%\begin{eqnarray*}
%\lim_{p \rightarrow 1} \frac{1}{\left(\frac{x_p}{\mu}\right)^2-\frac{(1-L(p))}{f(x_p) \mu}}=0.
%\end{eqnarray*}

To summarize, we have shown\footnote{In case of dealing with a finite sample, naturally, there is a maximum value of the income, say $max \{x_i ; i=1,\ldots,n\} =x_{max}$ and for the empirical version $\beta_{Z_n}(p)$ of $\beta_{Z}(p)$ it holds $ \lim_{p \rightarrow 1} \beta_{Z_n}(p)= (\overline{X}/x_{max})^2$, where $\overline{X}$ is the empirical mean.} that  
\begin{eqnarray*}
\lim_{p \rightarrow 0} \nu_Z(p)&=&+ \infty, \quad \textrm{and} \quad \lim_{p \rightarrow 1} \nu_Z(p)=0, \\
\lim_{p \rightarrow 0} \nu^*_Z(p)&=&+ \infty, \quad \textrm{and} \quad \lim_{p \rightarrow 1} \nu^*_Z(p)=1/2. %, \\
% \beta_Z(0)&=&+ 1 \quad \quad \textrm{and} \quad \lim_{p \rightarrow 1} \beta_Z(p)=0.
\end{eqnarray*}

\subsubsection{The weight $\nu^*_Z(p)$ is a decreasing function of $p$}
 %%%%%%%%%%%%%%%%%%%%%%%%%%%%% 
 \begin{eqnarray*}
{\nu^*}'_Z(p)&=&\frac{d}{dp}\frac{(- \ln p +p-1 )}{(1-p)^2} \\
&=&\frac{\left(-\frac{1}{p}+1\right)(1-p)^2- (- \ln p +p-1 )2(1-p)(-1)}{(1-p)^4}\\
&=&\frac{-\frac{(1-p)^2}{p}+2(- \ln p +p-1 )}{(1-p)^3}.
\end{eqnarray*}

Now we have to study the sign of the numerator of the last expression, that can be rewritten as follows
\begin{eqnarray*}
-2 \ln p -\frac{1-p^2}{p}.
\end{eqnarray*}

Let us set 
\begin{eqnarray*}
h(p)=-2 \ln p, \quad \textrm{and} \quad g(p)=\frac{1-p^2}{p},
\end{eqnarray*}
and observe that
\begin{eqnarray*}
\lim_{p \rightarrow 0} h(p)= \lim_{p \rightarrow 0} g(p)=+\infty, \quad \textrm{and} \quad \lim_{p \rightarrow 1} h(p)= \lim_{p \rightarrow 1} g(p)=0. 
\end{eqnarray*}

By studying the derivatives
 \begin{eqnarray*} 
h'(p)=-2/p, \quad \textrm{and} \quad g'(p)=-(1+p^2)/p^2
\end{eqnarray*}
 which are both negative $\forall p \in (0,1]$, and such that $|h'(p)| \leq |g'(p)|$, we conclude that $h(p) \leq g(p)$.
 
 This means that 
 \begin{eqnarray*} 
 {\nu^*}'_Z(p)\leq 0 \quad \quad \forall p \in (0,1].
 \end{eqnarray*}
  
\subsubsection{The weight $\nu^*_Z(p)$ sum up to 1}
 %%%%%%%%%%%%%%%%%%%%%%%%%%%%% 
 To show this property, we start from
 \begin{eqnarray*}
 \frac{d}{dp}\frac{(- \ln p +p-1 )}{(1-p)} = \frac{(- \ln p +p-1 )}{(1-p)^2} -\frac{1}{p}
\end{eqnarray*}
that helps us in solving 
 \begin{eqnarray*}
 \int_0^1 \frac{(- \ln p +p-1 )}{(1-p)^2} dp&=& \frac{(- \ln p +p-1 )}{(1-p)} \bigg|_0^1+\int_0^1 \frac{1}{p} dp	\\
 &=& \left[\frac{(- \ln p +p-1 )}{(1-p)} +\ln p \right.\bigg|_0^1	\\
 &=&\left[-1-\frac{\ln p}{(1-p)/p}\right.\bigg|_0^1.
\end{eqnarray*}

The latter expression leads to two indeterminate forms, of type $\left[\frac{0}{0}\right]$ and $\left[\frac{-\infty}{+\infty}\right]$ respectively. By De l'Hopital we have
 \begin{eqnarray*}
-\frac{\ln p}{(1-p)/p}\bigg|_0^1=\lim_{p\rightarrow 1} \,\frac{1/p}{1/p^2}\,\,-\lim_{p\rightarrow 0} \,\frac{1/p}{1/p^2}\,=1
\end{eqnarray*}
from which we get 
 $\int_0^1 \nu^*_Z(p) dp=1$.
 
 %\subsubsection{The weight $\beta_Z(p)$ is a decreasing function of $p$}
% %%%%%%%%%%%%%%%%%%%%%%%%%%%%% 
%\begin{eqnarray*}
%\beta'_Z(p)&=&\frac{d}{dp}\frac{(1-p )^2}{(1-L_F(p))^2} \\
%&=&\frac{-2(1-p)(1-L (p))^2+2(1-p)^2(1-L(p))x_p/\mu}{(1-L(p))^4}
%\end{eqnarray*}
%The last expression is negative if and only if $-(1-L (p))+(1-p)x_p/\mu \leq 0$. The latter inequality is true, because we have
%\begin{eqnarray*}
%\frac{x_p}{\mu} \leq \frac{1-L(p)}{1-p}=\frac{1-\frac{1}{\mu}\int_0^p x_s ds}{1-p}=\frac{\frac{1}{\mu}\int_p^1 x_s ds}{1-p}=\frac{\mathbb{E}(X | X \geq x_p)}{\mu}.
%\end{eqnarray*}

\subsubsection{The weight $\nu_Z(p)$ is a decreasing function of $p$}
 %%%%%%%%%%%%%%%%%%%%%%%%%%%%% 

 From the definition, $\nu_Z(p)=\nu^*_Z(p)\,\,\beta_Z(p)$, and due to $\nu'_Z(p)={\nu^*}'_Z(p)\,\,\beta_Z(p)+\nu^*_Z(p)\,\,\beta'_Z(p)$ and the previous results, we get the thesis.

\subsubsection{The weight $\nu^*_Z(p)$ is an upward concave function of $p$}
 %%%%%%%%%%%%%%%%%%%%%%%%%%%%% 

\begin{eqnarray*}
{\nu^*}''_Z(p)&=&\frac{d^2}{dp^2}\left[\frac{(- \ln p +p-1 )}{(1-p)^2} \right]\\
&=&\frac{d}{dp}\left[\frac{-\frac{1}{p}+p-2 \ln p }{(1-p)^3}\right]\\
&=&\frac{\left(\frac{1}{p^2}+1- \frac{2}{p}\right)(1-p)^3+3(1-p)^2\left(- \frac{1}{p}+p-2\ln p\right) }{(1-p)^6}\\
&=&\frac{\frac{1}{p^2}- \frac{6}{p}+2p+3-6\ln p }{(1-p)^4}.
\end{eqnarray*}

To show the upward concavity of $\nu^*_Z(p)$ we will show that the numerator of the last ratio is non negative, that is
\begin{eqnarray*}
\frac{2p^3+3p^2-6p+1}{p^2} \geq 6 \ln p.
\end{eqnarray*}

We decompose $h(p)=\frac{1}{p^2}\left(2p^3+3p^2-6p+1\right)=\frac{2}{p^2}(p-1)\left(p-\frac{-5+\sqrt{33}}{4}\right)\left(p+\frac{-5+\sqrt{33}}{4}\right)$
and see that $h(p) \geq 0\geq 6\ln p$ for $0\leq p \leq \frac{-5+\sqrt{33}}{4}\approx0.186$.

For $\frac{-5+\sqrt{33}}{4}\leq p \leq 1$, we have that $h(p)-6\ln p$ is a monotonically decreasing function, and gets values in $[0,-6\ln \frac{-5+\sqrt{33}}{4}]$. Indeed:
\begin{eqnarray*}
\frac{d}{dp}\left[\frac{2p^3+3p^2-6p+1}{p^2} - 6 \ln p\right]&=&\\
&=&\frac{\left(6p-6+6p^2\right)p^2-2\left(1+3p^2-6p+2p^3\right)p}{p^4}-\frac{6}{p}\\
&=&\frac{(-2+6p+2p^3-6p^2)}{p^3}\\
&=&\frac{2(p^3-1)}{p^3}\leq 0.
\end{eqnarray*}

\section{Concluding Remarks}\label{conclusioni}
This paper is a first contribution to derive, analyze, and understand, the social welfare implications of the Zenga equality curve and index. Following the existing literature, and the recent publications by \cite{son2013equity} and \cite{KakwaniSon2019}, we first introduce the social welfare function according to the Zenga methodology and then propose how to interpret the weight function to be attached to each income level in order to compute the social welfare level. In particular, differently from the Gini and Bonferroni coefficients, the weights of the Zenga index $\nu_Z(p)$ can be split into two different contributions: the first one, which we call $\nu^*_Z(p)$, that depends only on the income ordering and has a non-negative, concave upward and strictly decreasing behavior, and the second one, $\beta_Z(p)$, also decreasing with respect to the income ordering, which is related to the inequality of the income distribution under consideration. We discuss the properties of these elements to address the peculiarities of the new approach.

% ---------------------------------------------------------------------------
%\section*{Acknowledgments}
% ---------------------------------------------------------------------------
%We would like to thank PINCO PALLO and PINCO PALLO for their useful comments that helped us improve the paper.

%\newpage

%\section*{References}
\bibliographystyle{plainnat}

%\bibliography{biblio_simo}

\newpage
%\bigskip 

\appendix
\renewcommand\thefigure{\thesection.\arabic{figure}}  
\renewcommand\thefigure{\thesection.\arabic{table}}

\end{document}